\DeclareMathOperator{\xbetween}{between}
\def\indiff{\simeq}
\def\acceptable{{\mathcal A}}
\newcommand{\GStar}[2]{\Gamma^*(#1, #2)}
\newcommand{\GStarRestrict}[3]{\restrict{\Gamma^*(#1, #2)}{#3}}
\def\GStarMu{\GStar{\mu}{\mu'}}
\def\GStarR{\GStarRestrict{\mu}{\mu'}{i}}
\def\GStarDotMu{\GStar{\mu^j}{\mu^{j+1}}}
\def\GStarRjj{\GStarRestrict{\mu}{\mu'}{j+1}}
\newcommand{\tbdefined}[1]{\textbf{#1}} % For each definition, I bold the word being defined
\newcommand{\setdiff}{\backslash}
\newcommand{\setb}{\ : \ } % Set builder notation
\newcommand{\restrict}[2]{\left. #1 \right|_{#2}} % Restrict function #1 to domain #2
\newcommand{\abs}[1]{\left\lvert #1 \right\rvert}
\newcommand{\curly}[1]{\left\{ #1 \right\}}
\newcommand{\round}[1]{\left( #1 \right)}
\newcommand{\floor}[1]{\left\lfloor #1 \right\rfloor}
\newcommand\RR{\mathbb R}
\begin{document}

\title{On the diameter of the polytope of the stable
marriage with ties}
%
%\titlerunning{Abbreviated paper title}
% If the paper title is too long for the running head, you can set
% an abbreviated paper title here
%
\author{Felix Bauckholt \and Laura Sanit\`{a}}
\authorrunning{F. Bauckholt and L. Sanit\`{a}}
% First names are abbreviated in the running head.
% If there are more than two authors, 'et al.' is used.
%
\institute{Department of Combinatorics and Optimization, University of Waterloo, \\
Waterloo, ON N2L 3G1, Canada \\
\email{felixbauckholt@gmail.com, lsanita@uwaterloo.ca}}
\maketitle              % typeset the header of the contribution
\begin{abstract}
The stable marriage problem with ties is a well-studied and interesting problem in game theory.
We are given a set of men and a set of women. Each individual has a preference
ordering on the opposite group, which can possibly contain ties. 
A stable marriage is given by a matching between men and women
for which there is no blocking pair, i.e., a men and a women who strictly 
prefer each other to their current partner in the matching. 

In this paper, we study the diameter of the polytope given by the convex hull of characteristic
vectors of stable marriages, in the setting with ties. 
We prove an upper bound of $\lfloor \frac{n}{3}\rfloor$ on the diameter, where $n$ is the total number 
of men and women, and give a family of instances
for which the bound holds tight. Our result generalizes the bound on the diameter 
of the standard stable marriage polytope (i.e., the well-known polytope that describes the setting without ties), 
developed previously in the literature.

\keywords{Stable matchings \and Diameter of polytopes.}
\end{abstract}
\section{Introduction}
The stable marriage problem is one of the most popular and fundamental problems in game theory.
An instance of the problem can be described by a (non necessarily complete) bipartite graph, 
where the bipartition is on a set $M$ of men and a set $W$ of women.
For each man $m$ (resp. woman $w$) there is a strict ordering defined on the neighboring women (resp. men).
A \emph{stable marriage} is given by a matching between men and women that does not have a \emph{blocking pair}, i.e., 
a pair of individuals that mutually prefer each other to their current partner in the matching.
The problem was introduced in the seminal work of Gale and Shapley~\cite{GaleShapley62}, which shows that a stable marriage always exists, and gives an elegant algorithm to efficiently find one. Since then, the stable marriage problem and its variants have been intensively studied by plenty of researchers in multi-disciplinary contexts, often bridging the areas of applied mathematics, computer science, and economics.

As it is natural to expect, the problem has been widely investigated also from a polyhedral point of view. 
In particular, a Linear Programming (LP)-description of the so-called \emph{stable marriage polytope} is well-known (see \cite{LP-bliss,onthestable,Rothblum,Jochen}). The stable marriage polytope is defined as the convex hull of 
the characteristic vectors of all stable marriages associated to a given instance.
Many structural properties of this polytope have been established in the literature, including a bound on its \emph{diameter}. 
We recall that the diameter is given by the maximum length of a shortest path between two vertices on the 1-skeleton of a polytope\footnote{The 1-skeleton of a polytope $P$ is the graph in which the vertices correspond to the extreme points of $P$, and the edges correspond to the 1-dimensional faces of $P$.}. The diameter is arguably one of the most important polyhedral concepts, and it constitutes a central research topic in discrete mathematics. In particular, bounding the diameter of polytopes that correspond to the set of feasible solutions of fundamental combinatorial optimization problems has been a classical subject of investigation for more than 50 years (just to mention a few, see e.g. diameter results for matchings, TSP, or network flows and transportation in~\cite{BalRus,CHVATAL,bmatch,Padberg1974,RS98,Borgwardt2018,Brightwell2006,Bal,Borgwardt2016,Sanita18}).   
For the stable marriage polytope, which we call $\mathcal{P}_{SM}$ here, Eirinakis et al.~\cite{stable_diameter} proved a diameter upper bound of  $\lfloor{n/4}\rfloor$, 
 where $n:=|M \cup W|$.
The authors also show the existence of instances for which this bound holds tight.

 \smallskip
In this paper, we focus on the \emph{stable marriage problem with ties}, that is an important and highly popular generalization of the stable marriage problem. 
Here for each man $m$ (resp. $w$) there is an ordering
defined on the neighboring women (resp. men), but
unlike the standard stable marriage setting, these orderings are now \emph{weak} linear orders, i.e., they can contain ties.
This generalization allows for more flexibility in modeling practical settings, where the 
assumption of having a strict order is too restrictive (see~\cite{Irving}). However, the presence of ties makes the problem definitely more difficult. In this case, different stable matchings can have different cardinalities, in contrast with the standard setting, and finding one stable matching of maximum cardinality becomes NP-hard (in fact, APX-hard) 
\cite{incomplete_ties,HardVariants,Iwamaetal07}. Consequently, optimizing over the corresponding polytope is hard, and no LP-description for it is known. 

\medskip
\noindent
{\bf Our results and techniques.} In this paper, we give an upper bound of $\lfloor \frac{n}{3} \rfloor$ on the diameter of the polytope of the stable marriage problem with ties (which we call $\mathcal{P}_{SMT}$), and give a family of instances for which our bound holds tight. Our result generalizes what is known for $\mathcal{P}_{SM}$, meaning that if all preference orderings are strict, then it recovers the bound given in~\cite{stable_diameter}. However, it relies on different and new ingredients, which we are going to describe next.

A key tool used in~\cite{stable_diameter} to bound the diameter of $\mathcal{P}_{SM}$ is the so-called \emph{stable marriage graph}, introduced in \cite{Maffray}. The stable marriage graph is an auxiliary graph that one can construct (in polynomial time) for a given instance of the standard stable marriage problem. The author of \cite{onthestable} showed that extreme point adjacency on $\mathcal{P}_{SM}$ can be inferred by looking at the number of nontrivial connected components of a (suitably defined) subgraph of the stable marriage graph. The authors of \cite{stable_diameter} prove that the number of nontrivial components of this subgraph indeed yields an upper bound on the distance between two extreme points of $\mathcal{P}_{SM}$. The proofs of all these results rely crucially on the fact that the stable marriages (in the setting without ties) form a so-called \emph{distributive lattice} (see \cite{lattice}). Such
arguments do not apply in our setting since stable marriages in the presence of ties do not have this nice property. In addition, we do not have an explicit LP-description of the polytope.

Despite this, we show that one can still give a graphical characterization of adjacency for the extreme points
of $\mathcal P_{SMT}$. To this end, we extend the definition of the stable marriage graph and its relevant subgraphs to the setting with ties,
and obtain properties similar to the ones used in \cite{stable_diameter}. Namely,
(i) two extreme points are adjacent on $\mathcal{P}_{SMT}$ if and only if a suitably defined subgraph of our stable marriage graph
has only one nontrivial connected component, and (ii) one can construct a path between these points by switching coordinates on
one connected component of this subgraph, at each step. Proving condition (i), in particular, requires new arguments. 
Given two extreme points $x$ and $x'$ of $\mathcal P_{SMT}$, we identify some inequalities that are valid for our polytope and are tight for both $x$ and $x'$, exploiting the standard LP-relaxation for $\mathcal{P}_{SMT}$. To infer adjacency, we look at the \emph{span} of the corresponding coefficient vectors, and use special subgraphs of the marriage graph (which we call \emph{principal blocks}) to find a subset of linearly independent vectors of sufficiently large cardinality. 

As a corollary of our arguments, it follows that although $\mathcal{P}_{SMT}$ models an NP-hard optimization problem, testing adjacency of two extreme points can be done in polynomial time.
In fact, the existence of an efficient characterization of vertex adjacency for polytopes that model NP-hard combinatorial problems
is a subject of research of independent interest (see e.g. the classical hardness result for TSP~\cite{Papadimitriou1978}, and more results in~\cite{CHVATAL,matsui,MATSUIA1995311,Hausmann1978,AGUILERA201740}). As mentioned in some of those papers,
results in this area have the potential to be exploited algorithmically, e.g. in the context of local search techniques.

In a nutshell, with this paper we add the polytope of the stable marriage problem with ties to (i) the list of polytopes
modeling combinatorial problems for which non-trivial bounds on the diameter have been given, and (ii)
the list of polytopes modeling NP-hard problems, for which testing extreme point adjacency can be done in polynomial time. 
In addition, we hope that the structural and graphical insights on the problem developed here could be of interest 
beyond our particular result, and e.g. be useful 
also from an algorithmic perspective.

\section{Preliminaries and Notation}
\label{sec:preliminaries}
We are going to represent an instance of the stable marriage problem with ties as follows. For each man $m$ (resp. woman $w$), we assume to have an ordering $\leq_m$ (resp. $\leq_w$) defined on some subset $P(m) \subseteq W$ of women (resp. $P(w) \subseteq M$ of men). Note that
the subsets can be \emph{strict}, i.e., we are dealing with the general case of (possibly) incomplete lists. 
We write $w \leq_m w'$ if $m$ weakly prefers $w'$ to $w$, $w <_m w'$ if $m$ strictly prefers $w'$
to $w$, %(that is, $w \leq_m w'$ but not $w' \leq_m w$), 
and we write $w \indiff_m w'$ if $m$ is indifferent between $w$ and $w'$ (that is, $w \leq_m w'$ and $w'
\leq_m w$).
A pair $(m, w)$ is called an \emph{acceptable pair} if $m \in P(w)$ and $w
\in P(m)$. We let $\acceptable \subseteq M \times W$ be the set of all acceptable
pairs.

A \emph{matching} $\mu \subseteq \acceptable$ is simply a set of acceptable man-woman pairs
such that each man and each woman occurs in at most one pair. For convenience, we will also regard $\mu$
as a function, so that $\mu(m) = w$ and $\mu(w) = m$ for every $(m, w) \in \mu$.
Whenever a man or woman occurs in no pairs of $\mu$, that man or woman is said
to be \emph{single}. For each single man $m$ (resp. woman $w$), we
define $\mu(m) = \bot$ (resp. $\mu(w) = \bot$), and assume $\bot <_m w$ for each $w \in P(m)$
(resp. $\bot <_w m$ for each $m \in P(w)$).
There are several ways of generalizing the concept of a stable matching to the
scenario with ties. We follow the common literature in using
weak stability, as defined by \cite{Irving}.

\begin{definition}
A matching $\mu$ is \emph{stable}, if for every acceptable pair $(m, w) \in
\acceptable$, we have $m \leq_w \mu(w)$ or $w \leq_m \mu(m)$.
\end{definition}
A pair $(m, w) \in \acceptable$ for which this condition fails, that is, where
$m >_w \mu(w)$ and $w >_w \mu(m)$, is a blocking pair.

The polytope $\mathcal{P}_{SMT}$ is defined to be the convex hull
of all points $x_\mu \in \RR^\acceptable$ 
for all stable matchings $\mu$, where
we define $x_\mu$ such that $x_\mu((m, w)) = 1$ if $(m, w) \in \mu$ and
$x_\mu((m, w)) = 0$ otherwise.

In the standard stable marriage setting, $\mathcal{P}_{SM}$ can be described by a
list of inequalities whose size is linear in $|\acceptable|$  \cite{LP-bliss,Rothblum}. In the presence of ties, generalizations of those inequalities do not provide an exact description of our polytope, but they still provide a valid LP-relaxation for  $\mathcal{P}_{SMT}$, that we are going to use in this paper. Here is the set of such inequalities:

\begin{align}
\label{stuffgeq0}
x((m, w)) &\geq 0 && \text{for all } (m, w) \in \acceptable \\
\label{mleq1}
\sum_{\mathclap{w \in W: (m, w) \in \acceptable}} x((m, w)) &\leq 1 && \text{for all } m \in M \\
\label{wleq1}
\sum_{\mathclap{m \in M: (m, w) \in \acceptable}} x((m, w)) &\leq 1 && \text{for all } w \in W \\
\label{theweirdinequality}
x((m, w))
+ \sum_{\mathclap{\substack{m' \in M \setdiff \curly m: \\ (m', w) \in \acceptable, \\ m' \geq_w m}}} x((m', w))
+ \sum_{\mathclap{\substack{w' \in W \setdiff \curly w: \\ (m, w') \in \acceptable, \\ w' \geq_m w}}} x((m, w'))
&\geq 1 && \text{for all } (m, w) \in \acceptable
\end{align}

Intuitively, for a $0/1$ vector $x$, the first three sets of inequalities enforce that $x$ is a matching, while the fourth one enforces
that every pair $(m,w)$ cannot be a blocking pair (either $(m,w)$ is in the matching, or at least one between $w$ and $m$ 
has to be matched with an equally or better ranked partner).

For any given (possibly directed) graph $H$, we let $V(H)$ denote its vertex set and $E(H)$ denote its edge set.
A vertex $v \in V(H)$ is called a successor of $u \in V(H)$ if there is a directed edge in $E(H)$ whose head is $v$
and whose tail is $u$. A component of $H$ is a maximal (weakly) connected subgraph of $H$. 
A component of a graph is called trivial if it is a singleton vertex, and nontrivial otherwise.

In the following, we will use the symbol $\land$ to denote the AND logical operator, and the symbol $\lor$ to denote the OR logical operator.

\section{The marriage graph}

The marriage graph defined in~\cite{Maffray} naturally extends to the setting with ties, as follows.

\begin{definition}
The \tbdefined{marriage graph} $\Gamma$ is a directed graph with vertex set
$V(\Gamma) := \acceptable$, and edge set
\begin{align*}
E(\Gamma) :=&
\curly{((m, w), (m, w')) \setb \Big( (m, w), (m, w') \in V(\Gamma) \Big)  \land \Big( w \neq w' \Big)  \land \Big( w \leq_m w' \Big) }
\\ \cup&
\curly{((m, w), (m', w)) \setb \Big( (m, w), (m', w) \in V(\Gamma) \Big) \land \Big( m \neq m' \Big) \land \Big( m \leq_w m' \Big)}.
\end{align*}
\end{definition}

We say that an edge of the form $((m, w), (m, w'))$ \emph{represents the
preference of the man $m$}, while an edge of the form $((m, w), (m', w))$
represents the preference of the woman $w$. See Fig.~\ref{pic:marriage_graph} for an example.

\begin{figure}
\caption{Example of a marriage graph $\Gamma$.\hspace{8cm}}
\label{pic:marriage_graph}
\begin{center}
\begin{tikzpicture}[scale=1.8]
\tikzstyle{every node}=[rectangle,draw=black]%,minimum size=25pt]
\tikzstyle{edge}=[->, >=latex, line width=1pt, blue]
\node (At) at (0, 0) {$m_3,w_1$};
\node (Ab) at (1, 0) {$m_3, w_2$};
\node (Bt) at (2, 1) {$m_2,w_3$};
\node (Bb) at (3, 1) {$m_2, w_4$};
\node (As) at (1, 2) {$m_1, w_2$};
\node (Bs) at (3, 2) {$m_1,w_4$};
\node (L) at (0, 1) {$m_2,w_1$};
\draw[edge, red, ->] ([yshift=0.4mm]At.east) -- ([yshift=0.4mm]Ab.west);
\draw[edge, red, <-] ([yshift=-0.4mm]At.east) -- ([yshift=-0.4mm]Ab.west);
\draw[edge, red, ->] ([xshift=-0.4mm]As.south) -- ([xshift=-0.4mm]Ab.north);
\draw[edge, red, <-] ([xshift=0.4mm]As.south) -- ([xshift=0.4mm]Ab.north);
\draw[edge, red, ->] ([yshift=0.4mm]Bt.east) -- ([yshift=0.4mm]Bb.west);
\draw[edge, red, <-] ([yshift=-0.4mm]Bt.east) -- ([yshift=-0.4mm]Bb.west);
\draw[edge, red] (Bs) -- (Bb);
\draw[edge, red] (Bs) -- (As);
\draw[edge, red] (Bt) -- (L);
\draw[edge, red] (At) -- (L);
\draw[edge, red, bend right] (Bb.north west) to [out=-30, in=-150] (L.north east);
\end{tikzpicture}
\end{center}
The above marriage graph represents the instance with $M=\{m_1, m_2, m_3\}$, $W=\{w_1, w_2, w_3, w_4\}$, and the following 
preferences: \\
$P(m_1) = \{w_2,w_4\}$ with $w_4 <_{m_1} w_2$;\\
$P(m_2) = \{w_1,w_3,w_4\}$ with $w_3 <_{m_2} w_1$, $w_4 <_{m_2} w_1$, $w_3 \indiff_{m_2} w_4$;\\
$P(m_3) = \{w_1,w_2\}$ with $w_1 \indiff_{m_3} w_2$;\\
$P(w_1) = \{m_2,m_3\}$ with $m_3 <_{w_1} m_2$;\\
$P(w_2) = \{m_1,m_3\}$ with $m_1 \indiff_{w_2} m_3$;\\
$P(w_3) = \{m_2\}$;\\
$P(w_4) = \{m_1,m_2\}$ with $m_1 <_{w_4} m_2$.\\
\end{figure}
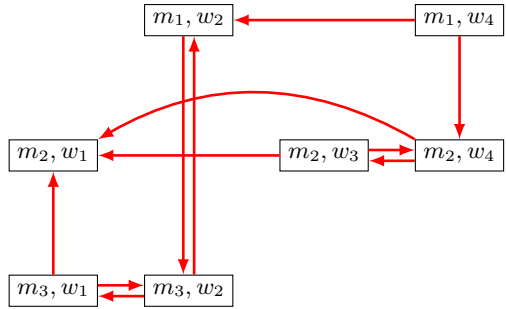

Similarly to~\cite{onthestable}, we would like to characterize adjacency on $\mathcal P_{SMT}$ by looking at some suitable subgraph
of $\Gamma$. To this end, we properly tweak the subgraph definition that is used in the standard stable marriage setting.
Specifically, given two stable matchings $\mu$ and $\mu'$, 
we define a subgraph generated by
the matchings as follows:

\begin{definition}
For two stable matchings $\mu$ and $\mu'$, we define $\GStarMu$ to be the
subgraph of $\Gamma$ induced by the vertex set
\[
\curly{(m, w) \in V(\Gamma) \setb \Big( \xbetween_m(w, \mu(m), \mu'(m)) \Big) \land \Big(\xbetween_w(m, \mu(w),
\mu'(w)\Big)}
\]
where $\xbetween_d(a, b, c)$ is the following boolean expression
\[
\xbetween_d(a, b, c) :=
\Big(a = b \Big) \lor \Big(a = c \Big) \lor \Big( b <_d a \leq_d c \Big) \lor \Big( c <_d a \leq_d b \Big).
\]
 \end{definition}

If there are no ties, this definition is equivalent to the more elegant
definition in \cite{onthestable,stable_diameter}. In the presence of ties, our definition
allows us to recover the following important property.

\begin{lemma}
\label{trivial_component_GStarMu}
For any $v \in V(\Gamma)$, $v$ is an isolated vertex of $\GStarMu$ if and only if $v \in \mu \cap \mu'$.
\end{lemma}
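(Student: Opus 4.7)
The plan is to handle the two directions separately. The ``if'' direction is essentially immediate from the definition of $\xbetween$, whereas the ``only if'' direction, proved by contrapositive, requires constructing a concrete neighbor of $v$ and is where stability of $\mu$ and $\mu'$ enters the argument.

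For the ``if'' direction, suppose $v = (m, w) \in \mu \cap \mu'$. Then $\mu(m) = \mu'(m) = w$ and $\mu(w) = \mu'(w) = m$, so $v \in V(\GStarMu)$ via the $a = b$ clause of $\xbetween$. A neighbor of $v$ in $\Gamma$ has the form either $(m, w')$ with $w' \neq w$ or $(m', w)$ with $m' \neq m$. In the first case, $(m, w') \in V(\GStarMu)$ would require $\xbetween_m(w', w, w)$; but all four clauses of $\xbetween_d(a, b, c)$ with $b = c = w$ and $a \neq w$ fail, since the chain $w <_m w' \leq_m w$ is inconsistent with the weak linear order $\leq_m$. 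The second case is symmetric, so $v$ is isolated.

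For the converse I show the contrapositive: if $v = (m, w) \in V(\GStarMu) \setminus (\mu \cap \mu')$, then $v$ has a neighbor in $\GStarMu$. By the symmetry of $\GStarMu$ in its two arguments (the $\xbetween$ predicate is symmetric in its second and third arguments), I may assume $v \notin \mu$. If $\mu(m) \neq \bot$, set $u := (m, \mu(m))$, which differs from $v$ since $\mu(m) \neq w$; both conditions for $u \in V(\GStarMu)$ reduce to the $a = b$ clause (using $\mu(\mu(m)) = m$), and since $w$ and $\mu(m)$ are comparable under $\leq_m$, $\Gamma$ contains an edge between $u$ and $v$. The main obstacle is the remaining case $\mu(m) = \bot$: here I invoke stability. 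If $\mu(w)$ were also $\bot$, then $\bot <_m w$ and $\bot <_w m$ would make $(m, w)$ a blocking pair of $\mu$, contradicting its stability. Hence $\mu(w) \neq \bot$, and $u := (\mu(w), w)$ yields a neighbor of $v$ in $\GStarMu$ by the analogous argument.
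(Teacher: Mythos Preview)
Your proof is correct and follows essentially the same approach as the paper. The only cosmetic difference is that, in the ``only if'' direction, the paper directly invokes the non-blocking condition for $\mu$ to obtain a successor $v' \in \mu$ of $v$ in $\Gamma$, whereas you first case-split on $\mu(m) = \bot$ and then invoke stability; in both arguments the constructed neighbor is the same vertex $(m,\mu(m))$ or $(\mu(w),w)$, verified to lie in $\GStarMu$ via the $a=b$ clause of $\xbetween$.
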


\begin{proof}
For the first direction, let $v$ be an isolated vertex of $\GStarMu$.
It is enough to prove that $v \in \mu$. Suppose not. Since $v$ is not a blocking
pair, we must have some vertex $v' \in \mu$ ($v' \neq v$) such that $(v, v') \in E(\Gamma)$.
Without loss of generality, suppose that the edge $(v, v')$ represents the preference of a man $m$.
Let $v = (m, w)$, $v' = (m, w')$. Note that $\mu(m) = w'$.
Note that we have $\xbetween_m(w', \mu(m), \mu'(m))$ since $w' = \mu(m)$. We also have
$\xbetween_{w'}(m, \mu(w'), \mu'(w'))$ since $m = \mu(w')$. So $v' \in
V(\GStarMu)$. Since $v \in V(\GStarMu)$, we also have $(v, v') \in E(\GStarMu)$, contradicting that
$v$ is isolated.

For the second direction, assume that $v \in \mu \cap \mu'$. Clearly, we
have $v \in V(\GStarMu)$. Suppose for a contradiction that $v$ is not an
isolated vertex. Let $v'$ be a neighbor of $v$ in $\GStarMu$, and without loss of generality, 
assume that the edge between $v$ and $v'$ represents the preference of a
man $m$. Let $v = (m, w)$ and $v' = (m, w')$, where $w \neq w'$. Note that $w =
\mu(m) = \mu'(m)$. Then, either $w' \leq_m w$ or $w' >_m w$. In both cases,
we do not have $\xbetween_m(w',\mu(m), \mu'(m))$. So $v' \notin V(\GStarMu)$, again a contradiction.
\qed
\end{proof}

\smallskip
We now introduce the following further definitions.
For any man $m$, we let $P_{\mu, \mu'}^m$ be the subgraph of
$\GStarMu$ induced by all vertices involving $m$, i.e. induced by $\curly{(m', w') \in V(\GStarMu) \setb m' = m}$.
Similarly, for any woman $w$, we let $P_{\mu, \mu'}^w$ be the subgraph of
$\GStarMu$ induced by $\curly{(m', w') \in V(\GStarMu) \setb w' = w}$.
See Fig.~\ref{pic:principal_block} for an example.

\begin{figure}
\caption{Example of $\GStarMu$ and a principal block.\hspace{8cm}}
\label{pic:principal_block}
\begin{center}
\begin{tikzpicture}[scale=1.5]
\tikzstyle{every node}=[rectangle,draw=black]%,minimum size=25pt]
\tikzstyle{edge}=[->, >=latex, line width=1pt, blue]
\node (Ab) at (1, 0) {$m_3, w_2$};
\node (As) at (1, 2) {$m_1, w_2$};
\node (Bs) at (3, 2) {$m_1,w_4$};
\node (L) at (0, 1) {$m_2,w_1$};
\draw[edge, red, ->] ([xshift=-0.4mm]As.south) -- ([xshift=-0.4mm]Ab.north);
\draw[edge, red, <-] ([xshift=0.4mm]As.south) -- ([xshift=0.4mm]Ab.north);
\draw[edge, red] (Bs) -- (As);
\end{tikzpicture}
\end{center}
The above graph represents $\GStarMu$ for $\mu=\{(m_2, w_1),(m_1, w_2)\}$ and $\mu'=\{(m_2, w_1),(m_3, w_2),(m_1, w_4)\}$, with respect to the instance described in Fig.~\ref{pic:marriage_graph}.  
The pair of vertical edges yields the edges of the principal block $P^{w_2}_{\mu,\mu'}$.
\end{figure}

\begin{definition}
We call an induced subgraph $P$ of $\GStarMu$ a \emph{principal
block of $\GStarMu$} if $P = P_{\mu, \mu'}^m$ for some man $m$ or
$P = P_{\mu, \mu'}^w$ for some woman w.
\end{definition}

We have the following trivial observation.
\begin{proposition}
\label{edge_to_circuit}
For every edge $e \in E(\GStarMu)$, there is a principal block $P$ such that $e
\in P$.
\end{proposition}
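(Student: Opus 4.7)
The plan is to unpack the definition of an edge in $\Gamma$ (and hence in $\GStarMu$, which is an induced subgraph) and note that by construction every edge falls into exactly one of two types, each of which places both endpoints inside the same principal block.

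First I would fix an arbitrary $e \in E(\GStarMu)$. Since $\GStarMu$ is an induced subgraph of $\Gamma$, the edge $e$ is itself an edge of $\Gamma$, and so by the definition of $E(\Gamma)$ it has one of two forms: either $e = ((m, w), (m, w'))$ with $w \neq w'$ (representing the preference of man $m$), or $e = ((m, w), (m', w))$ with $m \neq m'$ (representing the preference of woman $w$). Both endpoints of $e$ are vertices of $\GStarMu$ since $e \in E(\GStarMu)$.

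In the first case, both endpoints of $e$ share the man $m$, so both endpoints belong to the vertex set $\curly{(m'', w'') \in V(\GStarMu) \setb m'' = m}$ that induces the principal block $P_{\mu, \mu'}^m$. Because $P_{\mu, \mu'}^m$ is induced in $\GStarMu$, every edge of $\GStarMu$ between two of its vertices is already contained in it, so $e \in P_{\mu, \mu'}^m$. In the second case, the symmetric argument applied to the woman $w$ shows $e \in P_{\mu, \mu'}^w$. In either case, we have found a principal block containing $e$, which completes the proof.

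There is no real obstacle here: the statement is essentially a bookkeeping observation about the definitions. The only subtlety worth flagging is that one must use the fact that principal blocks are defined as \emph{induced} subgraphs, so that edges of $\GStarMu$ between their vertices are automatically included; otherwise, the argument would require reverifying that the edge condition in $\Gamma$ is preserved, which it trivially is.
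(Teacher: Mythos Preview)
Your argument is correct and is exactly the unpacking of definitions that the paper has in mind; the paper itself labels this a trivial observation and omits the proof entirely. There is nothing to add.
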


\section{A characterization of adjacency}

The goal of this section is to prove the following theorem.

\begin{theorem}
\label{thetheorem}
Two matchings $\mu$ and
$\mu'$ correspond to adjacent extreme points in $\mathcal P_{SMT}$ if and only if
$\Gamma^*(\mu, \mu')$ has exactly one nontrivial component. 
\end{theorem}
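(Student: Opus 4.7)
I would prove the two directions separately, relating the combinatorics of $\GStarMu$ to the face structure of $\mathcal{P}_{SMT}$. For the $(\Rightarrow)$ direction I argue contrapositively: suppose $\GStarMu$ has $k\ge 2$ nontrivial components $C_1,\dots,C_k$, and for any $S\subseteq\{C_1,\dots,C_k\}$ with $V_S=\bigcup_{C\in S}V(C)$ define
\[
\mu_S:=(\mu\setdiff V_S)\cup(\mu'\cap V_S).
\]
The key sublemma is that each $\mu_S$ is a stable matching. It is a matching because whenever two distinct pairs $(m,w_1),(m,w_2)\in V(\GStarMu)$ share the man $m$ they both lie in the principal block $P^m_{\mu,\mu'}$, which is connected in $\GStarMu$ (for $w_1\ne w_2$ the preference $\leq_m$ always yields at least one directed edge between $(m,w_1)$ and $(m,w_2)$); hence $(m,\mu(m))$ and $(m,\mu'(m))$ lie in the same component and are swapped together, and symmetrically for women. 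Stability then follows by case analysis on an acceptable pair $(m,w)\notin\mu_S$: since $\mu_S(m)\in\{\mu(m),\mu'(m)\}$ and $\mu_S(w)\in\{\mu(w),\mu'(w)\}$, the non-blocking property of $\mu$ or $\mu'$ carries over whenever $m$'s and $w$'s pairs lie in the same component (automatic when $(m,w)\in V(\GStarMu)$, by applying the principal-block argument both to $P^m$ and $P^w$ at $(m,w)$), and when $(m,w)\notin V(\GStarMu)$ the ``between'' condition forces $w$ to be strictly worse or strictly better than both of $\mu(m),\mu'(m)$ in $\leq_m$ (or the symmetric statement for $w$), from which non-blocking for $\mu_S$ follows by direct preference comparison. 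Taking $S=\{C_1\}$ and $T=\{C_2,\dots,C_k\}$ yields stable matchings $\mu_S,\mu_T\notin\{\mu,\mu'\}$ with $x_{\mu_S}+x_{\mu_T}=x_\mu+x_{\mu'}$, so $\tfrac{1}{2}(x_\mu+x_{\mu'})$ is also the midpoint of a different segment inside $\mathcal{P}_{SMT}$ and $[x_\mu,x_{\mu'}]$ cannot be a face.

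For the $(\Leftarrow)$ direction, assume $\GStarMu$ has exactly one nontrivial component $C$. I would exhibit a family $\mathcal{T}$ of inequalities from the LP-relaxation (\ref{stuffgeq0})--(\ref{theweirdinequality}), each tight at both $x_\mu$ and $x_{\mu'}$, whose coefficient vectors span a subspace of $\RR^{\acceptable}$ of dimension $|\acceptable|-1$. Since those inequalities are valid for $\mathcal{P}_{SMT}$, the affine set obtained by promoting them to equalities is then a line containing both extreme points, so $\operatorname{conv}\{x_\mu,x_{\mu'}\}$ is a face of $\mathcal{P}_{SMT}$ and the two vertices are adjacent. The easy contributions to $\mathcal{T}$ are the non-negativity constraints (\ref{stuffgeq0}) for pairs $(m,w)\notin\mu\cup\mu'$, which already supply $|\acceptable|-|\mu\cup\mu'|$ independent unit vectors, together with the matching constraints (\ref{mleq1})--(\ref{wleq1}) for men and women matched by both $\mu$ and $\mu'$; these absorb in particular the isolated vertices $\mu\cap\mu'$ of $\GStarMu$ identified by Lemma \ref{trivial_component_GStarMu}.

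The heart of the proof, and the main obstacle, is the linear-independence count on the coordinates inside $\mu\cup\mu'$: one still needs $|\mu\cup\mu'|-1$ more independent coefficient vectors supported there. My plan is to proceed principal block by principal block inside the unique nontrivial component $C$. For each principal block $P=P^m_{\mu,\mu'}$ (or its women-side analogue), I would argue that a suitable selection of stability constraints (\ref{theweirdinequality}) indexed by pairs in $V(P)$ is simultaneously tight at $x_\mu$ and $x_{\mu'}$ (using the ``between'' condition to ensure that the preference comparisons appearing in these constraints behave consistently at both extreme points), and that together with the matching constraint for $m$ they contribute $|V(P)|$ independent vectors on $V(P)$'s coordinates. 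Gluing across blocks via Proposition \ref{edge_to_circuit} and the connectivity of $C$, the total count loses exactly one degree of freedom, matching precisely the kernel direction $x_\mu-x_{\mu'}$ and delivering the required $|\mu\cup\mu'|-1$. The delicate part is identifying which stability constraints are simultaneously tight at both extreme points, and verifying that the per-block rank contributions glue globally without any additional rank loss.
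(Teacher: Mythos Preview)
Your $(\Rightarrow)$ argument is essentially the paper's: defining $\mu_S$ for subsets of components and producing the midpoint identity is exactly Lemma~\ref{coollemma} plus Corollary~\ref{cor:only_if}. One wording fix: when $(m,w)\notin V(\GStarMu)$ because $\xbetween_m$ fails, $w$ need not be \emph{strictly} worse than both of $\mu(m),\mu'(m)$ (ties are possible); what actually follows is the dichotomy $w\leq_m\mu(m)\land w\leq_m\mu'(m)$ or $w>_m\mu(m)\land w>_m\mu'(m)$, and either branch kills blocking for $\mu_S$ as you intend.

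Your $(\Leftarrow)$ strategy---collect coefficient vectors of LP inequalities (\ref{stuffgeq0})--(\ref{theweirdinequality}) tight at both points and show they span a hyperplane---is precisely the paper's. The gap is in the per-block count. Inside a single principal block $P=P^m_{\mu,\mu'}$ you will \emph{not} get $|V(P)|$ independent vectors: the stability constraint~(\ref{theweirdinequality}) indexed by a pair $(m,w)\in\mu\cup\mu'$ is in general \emph{not} tight at both $x_\mu$ and $x_{\mu'}$ (e.g.\ at $(m,\mu(m))$ both the $m$-side and the $\mu(m)$-side successor in $\mu'$ can appear, giving LHS $=2$). So on the $V(P)$-coordinates the usable tight constraints are only the $|V(P)|-2$ nonnegativities plus the single matching constraint for $m$, yielding $|V(P)|-1$.

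The missing mechanism---and the paper's key step---is that the stability constraint is useful exactly at vertices $(m,w)\in V(\GStarMu)\setminus(\mu\cup\mu')$, i.e.\ at the \emph{intersection} of two different blocks $P^m$ and $P^w$. There (paper's Lemma~\ref{lem:obs}(d)/Claim~1) tightness at both $x_\mu,x_{\mu'}$ does hold, and after stripping unit vectors one obtains a \emph{cross-block} relation $e_{(m,\mu(m))}+e_{(\mu'(w),w)}\in E$ or $e_{(m,\mu'(m))}+e_{(\mu(w),w)}\in E$. These cross relations, together with the per-block matching constraints, are what let connectivity of the single nontrivial component propagate a ``linked'' relation (paper's Lemmas~\ref{compintersect}--\ref{lem:final}) down to codimension~one. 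Your ``gluing across blocks'' sentence is the right intuition, but it is these stability constraints at non-matched $\GStarMu$-vertices---not stability constraints inside a single block---that supply the glue.
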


 Note that, since $\Gamma$ (and hence $\Gamma^*(\mu, \mu')$) can be constructed in polynomial time, 
the above theorem implies that adjacency of two extreme points of $\mathcal P_{SMT}$ can be checked in polynomial time.

\subsection{Proof of the ``if'' part.} Let $\mu$ and $\mu'$ be two stable matchings such that $\GStarMu$
has only one nontrivial component. The strategy that we will use to show adjacency is as follows.

Let $E'$ be the set of \emph{all} possible inequalities that are valid for $\mathcal P_{SMT}$ and are tight
for both the extreme points corresponding to $\mu$ and $\mu'$. 
We will represent an element of $E'$ as a vector of the form $(\alpha, \beta)$, where $\alpha \in
\RR^{\acceptable}$ and $\beta \in \RR$,
expressing the inequality $\alpha^T x \geq \beta$.
To prove that $\mu$ and $\mu'$ correspond to two adjacent extreme points, we would like to show
that the set $\bar E:=\{\alpha \in \RR^{\acceptable}: (\alpha, \beta) \in E' \mbox{ for some } \beta \in \RR \}$ contains 
$\abs{\acceptable} - 1$ linearly independent vectors. To this end, let $E$ be the \emph{span} of $\bar E$.
We will prove that $E$ contains $\abs{\acceptable} - 1$ linearly independent vectors.

For an element $v \in \mu$, we let $e_v \in \RR^{\acceptable}$ be the standard basis vector indexed by $v$ (i.e., $e_v$ has value 1 in the entry
corresponding to $v$, and 0 otherwise).
\begin{lemma}\label{lem:obs}
The following holds:
\begin{enumerate}
\item[(a)] For every $v \notin \mu \cup \mu'$, we have $e_v \in E$; 
\item[(b)] For every man $m$ married in both $\mu$ and $\mu'$ with $\mu(m) \neq
\mu'(m)$, we have $(e_{(m, \mu(m))} + e_{(m,
\mu'(m))}) \in E$;
\item[(c)] For every woman $w$ married in both $\mu$ and $\mu'$ with $\mu(w) \neq
\mu'(w)$, we have $(e_{(\mu(w), w)} + e_{(\mu'(w), w)}) \in E$;
\item[(d)] For every pair $(m, w)$ in $V(\GStarMu)$, but not in $\mu \cup
\mu'$, we have either $(e_{(m, \mu(m))} + e_{(\mu'(w), w)}) \in E$, or
$(e_{(m, \mu'(m))} + e_{(\mu(w), w)}) \in E$.
\end{enumerate}
\end{lemma}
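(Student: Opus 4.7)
For each of the four claims I will exhibit, in the LP-relaxation \eqref{stuffgeq0}--\eqref{theweirdinequality}, an explicit inequality that is tight at both extreme points $x_\mu$ and $x_{\mu'}$, and then combine the resulting coefficient vector with vectors already known to lie in $E$. Thus parts (a), (b), (c) will be used as stepping stones for part (d), which is the only one requiring real work.

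Part (a) is immediate: for $v = (m,w) \notin \mu \cup \mu'$ the nonnegativity inequality \eqref{stuffgeq0} is tight at both $x_\mu$ and $x_{\mu'}$, so $e_v \in \bar E \subseteq E$. Parts (b) and (c) come from the degree inequalities \eqref{mleq1} and \eqref{wleq1}: if $m$ is matched in both $\mu$ and $\mu'$, then $\sum_{w} x((m,w)) \leq 1$ is tight at both, so the coefficient vector $\sum_{w:(m,w)\in\acceptable} e_{(m,w)}$ lies in $E$; subtracting $e_{(m,w)}$ for every $w \notin \{\mu(m),\mu'(m)\}$ via part (a) leaves precisely $e_{(m,\mu(m))} + e_{(m,\mu'(m))}$. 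Part (c) is symmetric.

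The heart of the argument is part (d). Fix $(m,w) \in V(\GStarMu) \setminus (\mu\cup\mu')$. From the definition of $\GStarMu$ and the fact that $w \neq \mu(m),\mu'(m)$ and $m \neq \mu(w),\mu'(w)$, one of the two strict-inequality cases of $\xbetween_m$ must hold; without loss of generality $\mu(m) <_m w \leq_m \mu'(m)$. Weak stability of $\mu$ applied to the non-edge $(m,w)$ then forces $m \leq_w \mu(w)$, which sharpens to $m <_w \mu(w)$ because $m \neq \mu(w)$; combined with $\xbetween_w(m,\mu(w),\mu'(w))$ this yields $\mu'(w) <_w m <_w \mu(w)$. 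I now plug these strict orderings into the blocking-pair inequality \eqref{theweirdinequality} for $(m,w)$: at $x_\mu$ the unique nonzero contribution comes from the term $e_{(\mu(w),w)}$ (the other sums vanish because $\mu(m) <_m w$ and $w$'s partner in $\mu$ is the only man $\geq_w m$); at $x_{\mu'}$ the unique nonzero contribution comes from the term $e_{(m,\mu'(m))}$. Hence the inequality evaluates to exactly $1$ at both extreme points and is tight at both. Its coefficient vector has the form
\[
e_{(m,w)} + e_{(\mu(w),w)} + e_{(m,\mu'(m))} + \sum_{v\in S} e_v,
\]
where every $v \in S$ satisfies $v \notin \mu \cup \mu'$ (the only vertices of $\mu \cup \mu'$ incident to $m$ or $w$ and appearing in the sum are the two explicitly written ones, by the strict orderings just derived). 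Subtracting each $e_v$ for $v \in S$ and also $e_{(m,w)}$ using part (a) therefore gives $e_{(\mu(w),w)} + e_{(m,\mu'(m))} \in E$. The symmetric case $\mu'(m) <_m w \leq_m \mu(m)$ produces the other alternative of (d).

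The main obstacle I anticipate is the case analysis in part (d): one must carefully deduce the four-term strict chain $\mu'(w) <_w m <_w \mu(w)$ (and its man-side analogue) from weak stability plus the $\xbetween$ conditions, and then use these chains to check that every other summand appearing in \eqref{theweirdinequality} corresponds to an acceptable pair lying outside $\mu\cup\mu'$, so that part (a) can absorb all of them. Once this bookkeeping is in place the argument is purely mechanical.
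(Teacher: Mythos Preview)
Your approach is essentially the paper's: parts (a)--(c) come from the same LP inequalities \eqref{stuffgeq0}--\eqref{wleq1}, and for (d) you use the stability inequality \eqref{theweirdinequality} at $(m,w)$, show it is tight at both $x_\mu$ and $x_{\mu'}$, and strip off the extraneous terms via (a). The paper packages the case split as ``Claim~1'' (either $\mu(m)\ge_m w$ and $\mu'(w)\ge_w m$, or the symmetric statement), but this is exactly what you derive from the $\xbetween$ conditions plus stability, and the subsequent bookkeeping on the successors of $(m,w)$ is identical.

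One small slip: the step ``$m \le_w \mu(w)$, which sharpens to $m <_w \mu(w)$ because $m \neq \mu(w)$'' is false in the presence of ties (you could have $m \indiff_w \mu(w)$ with $m\neq\mu(w)$), and consequently the conclusion should read $\mu'(w) <_w m \le_w \mu(w)$ rather than $\mu'(w) <_w m <_w \mu(w)$. This does not damage the argument, since everywhere downstream you only use the weak inequality $\mu(w)\ge_w m$; but since the whole point of the paper is to handle ties, you should correct the statement.
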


\begin{proof}
$(a)$: Note that $(e_v, 0) \in E'$: the validity of the corresponding inequality follows from \eqref{stuffgeq0}. The tightness for both $\mu$ and $\mu'$ is obvious. Therefore, $e_v \in E$.\\
$(b)$: For every man $m$ married in both $\mu$ and $\mu'$ with $\mu(m) \neq
\mu'(m)$, we have $(-e_{(m,\mu(m))} - e_{(m, \mu'(m))}, -1) \in E'$: the validity of the corresponding inequality follows from \eqref{mleq1}. The tightness for both $\mu$ and $\mu'$ is obvious. Thus $(e_{(m, \mu(m))} + e_{(m,\mu'(m))}) \in E$.\\
$(c)$: The argument follows as in $(b)$, relying on \eqref{wleq1} instead of \eqref{mleq1}.\\
$(d)$: To prove $(d)$, we first show the following claim.

\smallskip
\emph{Claim 1}: Either (i) $\mu(m) \geq_m w$ and $\mu'(w) \geq_w m$, or (ii) $\mu'(m) \geq_m w$ and $\mu(w) \geq_w m$.  

\smallskip
\emph{Proof of Claim 1.} 
Since $\mu$ is stable, we either have $\mu(m) \geq_m w$ or $\mu(w) \geq_w m$.
Similarly, since $\mu'$ is stable, we have $\mu'(m) \geq_m w$ or $\mu'(w) \geq_w
m$. Also, we must have $\mu(m) \geq_m w$ or $\mu'(m) \geq_m w$, since $(m, w) \in
V(\GStarMu)$. Similarly, we must have $\mu(w) \geq_w m$ or $\mu'(w) \geq_w m$.
Thus, if one of the two conditions in (i) is false, the two
conditions in (ii) must both be true, and vice versa. \hspace{2cm} $\diamond$

\smallskip
Now consider the following inequality:
\begin{equation}
\label{dumb_ineq}
\round{e_{(m,w)} + \sum_{v \text{ successor of } (m,w) \text{ in }\Gamma} e_v}^T x \geq 1
\end{equation}
It is clearly a valid inequality for all stable
matchings, since it is just a restatement of \eqref{theweirdinequality}.

Assume case (i) of Claim 1 holds. We have $\mu(m) \geq_m w$. Since $(m, w) \notin \mu \cup \mu'$
and $(m, w) \in V(\GStarMu)$, we have $w >_m \mu'(m)$. Thus, only one successor of $(m, w)$ is in $\mu'$ and that successor
is $(\mu'(w), w)$. Similarly, we have $\mu'(w) \geq_w m$ and thus we must
have $m >_w \mu(w)$. So only one successor of $(m, w)$ is in $\mu$: that
successor is $(m, \mu(m))$. This shows that the inequality~\ref{dumb_ineq} is tight for both $\mu$ and $\mu'$. 
Every successor $s$ of $(m, w)$ in $\Gamma$ that is not $(m, \mu(m)$ or $(\mu'(w),
w)$ is not in $\mu \cup \mu'$. So by $(a)$, for every such $s$, the vector $e_s$ is in
$E$. Similarly, since $(m, w) \notin \mu \cup \mu'$, we have $e_{(m, w)} \in E$. By subtracting the vector $e_s$ for every such $s$ and subtracting
the vector $e_{(m, w)}$ from the vector $\round{e_{(m,w)} + \sum_{v \text{ successor of } (m,w) \text{ in }\Gamma} e_v}$ given by the inequality \eqref{dumb_ineq}, we get the vector $(e_{(m, \mu(m))} + e_{(\mu'(w), w)})$, which must be in $E$.

If case (ii) of Claim 1 holds, the second part of the statement follows by a similar argument.\qed

\end{proof}

\begin{definition}
We say that $a \in \acceptable$ is \emph{linked} to $b \in \acceptable$ if $E$ contains a
vector of the form $(\lambda e_a + \gamma e_b)$, for some $\lambda, \gamma \in \RR$ such that $\lambda \neq 0$.
\end{definition}

The next lemma is immediate. 

\begin{lemma}
\label{link_reflexive} The following holds:
(i) For any $a \in \acceptable$, $a$ is linked to $a$; (ii) If $a$ is linked to $b$ and $b$ is linked to $c$, then $a$ is linked to $c$.
\end{lemma}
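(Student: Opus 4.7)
The plan is to read off both assertions from the linearity of $E$, which by construction is a subspace of $\RR^{\acceptable}$ (it is defined as a span).

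For part (i), I would just invoke the fact that $0 \in E$. Writing $0 = 1 \cdot e_a + (-1) \cdot e_a$ exhibits the zero vector in the form $\lambda e_a + \gamma e_a$ with $\lambda = 1 \neq 0$, so by the definition $a$ is linked to $a$.

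For part (ii), my strategy is a single Gaussian elimination step that cancels the intermediate basis vector $e_b$. Pick witnesses $v_1 = \lambda_1 e_a + \gamma_1 e_b \in E$ with $\lambda_1 \neq 0$ and $v_2 = \lambda_2 e_b + \gamma_2 e_c \in E$ with $\lambda_2 \neq 0$. In the generic case where $a,b,c$ are all distinct, the combination
\[
\lambda_2 v_1 - \gamma_1 v_2 \;=\; \lambda_1 \lambda_2 \, e_a \,-\, \gamma_1 \gamma_2 \, e_c \;\in\; E
\]
has $e_a$-coefficient $\lambda_1 \lambda_2 \neq 0$, so $a$ is linked to $c$. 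The degenerate cases reduce immediately: if $a=c$, apply part (i); if $a=b$, the hypothesis that $b$ is linked to $c$ is already the desired conclusion; and symmetrically for $b=c$. In particular we never have to worry about a spurious cancellation $\lambda_1\lambda_2 = \gamma_1\gamma_2$ collapsing the output vector, since such an issue only arises when $a=c$, which is absorbed into part (i).

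I expect no real obstacle here; the lemma is essentially a bookkeeping statement, which is exactly why the authors flag it as immediate. Its purpose, presumably, is to let subsequent arguments treat linkage as an equivalence relation (symmetry being built into the definition), so that linkage classes can be propagated cleanly through the principal blocks when assembling the $|\acceptable|-1$ linearly independent vectors in $E$ required to conclude adjacency.
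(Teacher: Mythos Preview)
Your proof is correct and essentially identical to the paper's: both use $0 \in E$ for (i) and a single Gaussian-elimination step to cancel $e_b$ for (ii), with your version being slightly more careful about the degenerate cases $a=b$, $b=c$, $a=c$. One correction to your closing commentary: linkage is \emph{not} symmetric by definition---the condition $\lambda \neq 0$ constrains only the $e_a$-coefficient, so $\gamma$ may well be zero---and indeed the paper neither claims nor uses symmetry, only reflexivity and transitivity.
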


\begin{proof}
The first part follows by observing that $(e_a - e_a) = \vec{0} \in E$. 
For the second part, suppose that $p = (\lambda e_a + \gamma e_b)$ and $p' = (\lambda' e_b +
\gamma' e_c)$ are both in $E$. Then, we have
$
p - \frac{\gamma}{\lambda'} p' = \round{\lambda e_a - \frac{\gamma \gamma'}{\lambda'} e_c} \in E
$. \qed
\end{proof}

We now show how to use principal blocks of $\GStarMu$ to construct vectors
in $E$ that will serve our purpose.

\begin{lemma}
\label{compintersect}
If $P$ and $P'$ are two principal blocks with a vertex in common,
then for any $v \in V(P) \cap (\mu \cup \mu')$ and $u \in V(P') \cap (\mu \cup
\mu')$, $v$ is linked to $u$.
\end{lemma}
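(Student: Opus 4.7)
The plan is to exploit the fact that ``linked'' behaves like an equivalence relation on $\acceptable$: it is reflexive and transitive by Lemma \ref{link_reflexive}. It therefore suffices to establish two sub-claims: (A) within a single principal block $P$, all vertices of $V(P) \cap (\mu \cup \mu')$ are pairwise linked; and (B) for two principal blocks $P, P'$ sharing a common vertex, there exist $v^* \in V(P) \cap (\mu \cup \mu')$ and $u^* \in V(P') \cap (\mu \cup \mu')$ that are linked. Applying (A) inside $P$, (B) across the two blocks, and (A) again inside $P'$ then yields a chain of links from $v$ to $u$, and transitivity finishes the proof.

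For (A), fix a principal block of the form $P = P_{\mu,\mu'}^m$ (the case $P = P_{\mu,\mu'}^w$ is fully symmetric). Every vertex of $V(P)$ involves the man $m$, so $V(P) \cap (\mu \cup \mu') \subseteq \{(m,\mu(m)), (m,\mu'(m))\}$, reading an entry as absent when $m$ is single in the corresponding matching. If this set has at most one element, reflexivity from Lemma \ref{link_reflexive}(i) suffices. Otherwise both $\mu(m)$ and $\mu'(m)$ are defined, and they must be distinct, for if they coincided then Lemma \ref{trivial_component_GStarMu} would force $V(P) = \{(m,\mu(m))\}$. In this remaining case, Lemma \ref{lem:obs}(b) directly provides the required link.

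For (B), observe that two distinct principal blocks can share a vertex only if one is of the form $P_{\mu,\mu'}^m$ and the other is of the form $P_{\mu,\mu'}^w$, and the shared vertex is $q = (m,w)$. If $q \in \mu \cup \mu'$, we take $v^* = u^* = q$ and apply reflexivity. If instead $q \notin \mu \cup \mu'$, Lemma \ref{lem:obs}(d) directly supplies the required linked pair with one endpoint in each block, both endpoints belonging to $\mu \cup \mu'$ (Claim 1 in the proof of (d) guarantees that the relevant matches of $m$ and $w$ are defined). The main technical subtlety throughout is handling the degenerate scenarios in which some partner equals $\bot$ or two partners coincide, which prevent direct invocation of Lemma \ref{lem:obs}(b),(c); fortunately, in such cases Lemma \ref{trivial_component_GStarMu} collapses the corresponding principal block down to at most one vertex of $\mu \cup \mu'$, so the intra-block step reduces to reflexivity.
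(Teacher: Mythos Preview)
Your proposal is correct and follows essentially the same approach as the paper's proof: the paper splits into Case A ($P=P'$), which is your claim (A), and Case B ($P\neq P'$), which is your claim (B) followed by the same transitivity chain through (A) on each side. Your write-up is slightly more careful about the degenerate scenarios (partners equal to $\bot$ or $\mu(m)=\mu'(m)$), which the paper handles only implicitly, but the underlying argument is identical.
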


\begin{proof}
We distinguish two cases.

Case A: $P=P'$. If $v = u$, then $v$ is linked to $u$ by Lemma~\ref{link_reflexive}$(i)$.
Otherwise, let us assume that $P = P_{\mu, \mu'}^m$ for some man $m$.
Then, $v$ and $u$ both involve
$m$, and since $v$ and $u$ are both in $\mu \cup \mu'$, we must have
$\curly{v, u} = \curly{(m, \mu(m)), (m, \mu'(m))}$. Thus, $v$ is linked to $u$
by Lemma~\ref{lem:obs}(b).
The case $P = P_{\mu, \mu'}^w$ for some woman $w$ follows similarly, relying on
Lemma~\ref{lem:obs}(c).

Case B: $P\neq P'$. If there is some $(m, w) \in V(P) \cap V(P')$ such that $(m, w) \in \mu \cup \mu'$ then, using the arguments of Case A, 
both $v$ and $u$ are linked to $(m, w)$, hence $v$ is linked to $u$ by Lemma~\ref{link_reflexive}$(ii)$.
Assume now that the above does not hold, and let $(m, w) \in V(P) \cap V(P')$ with
$(m, w) \notin \mu \cup \mu'$. Without loss of generality 
let $P = P_{\mu, \mu'}^m$ and $P' = P_{\mu, \mu'}^w$.
By Lemma~\ref{lem:obs}(d), either $(e_{(m, \mu(m))} + e_{(\mu'(w), w)})$ or $(e_{(m, \mu'(m))} + e_{(\mu(w), w)})$ is in $E$.
Assume the first condition holds (the other case is similar). Then,
$(m, \mu(m))$ and $(\mu'(w), w)$ are linked to each other.
We have $v, (m, \mu(m)) \in V(P)$, so by Case A, $v$ is linked to
$(m, \mu(m))$. We have $(\mu'(w), w),
u \in V(P')$, so by Case A, $(\mu'(w), w)$ is linked to $u$.
Finally, using Lemma~\ref{link_reflexive}$(ii)$, we see that $v$ is linked to $u$.
\qed

\end{proof}

\begin{lemma}
\label{lem:final}
Let $u,v \in \mu \cup \mu'$ be two distinct vertices in the nontrivial component of
$\GStarMu$. Then $v$ is linked to $u$.
\end{lemma}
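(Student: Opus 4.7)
The plan is to use the weak connectivity of the nontrivial component of $\GStarMu$ to build a chain of principal blocks connecting $u$ and $v$, and then repeatedly invoke Lemma~\ref{compintersect} together with the transitivity of the linked relation given in Lemma~\ref{link_reflexive}(ii). Concretely, I would first fix an undirected path $u = x_0, x_1, \dots, x_k = v$ inside the common nontrivial component of $\GStarMu$. By Proposition~\ref{edge_to_circuit}, for each $i \in \curly{0, \dots, k-1}$ the edge between $x_i$ and $x_{i+1}$ lies in some principal block $P_i$, so $x_i, x_{i+1} \in V(P_i)$; in particular, consecutive blocks $P_{i-1}$ and $P_i$ share the vertex $x_i$, which is exactly the hypothesis of Lemma~\ref{compintersect}.

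To actually chain applications of that lemma, I must produce, for each $i$, a vertex $y_i \in V(P_i) \cap (\mu \cup \mu')$, and this is really the only technical check. Suppose $P_i = P_{\mu, \mu'}^m$ for some man $m$ (the case of a woman is symmetric). Since $P_i$ contains $x_i$, some $(m, w) \in V(\GStarMu)$ involves $m$. If $m$ were single in both $\mu$ and $\mu'$, one would have $\mu(m) = \mu'(m) = \bot$, and $\xbetween_m(w, \bot, \bot)$ would force $w = \bot$, contradicting $(m, w) \in \acceptable$. Hence $m$ is matched in at least one of $\mu, \mu'$, and then $(m, \mu(m))$ (or $(m, \mu'(m))$ if $m$ happens to be single in $\mu$) is a vertex of $V(\GStarMu)$ lying in $\mu \cup \mu'$: both $\xbetween$-conditions are trivially satisfied since the first argument of each triple agrees with its second argument. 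This vertex involves $m$, hence it belongs to $V(P_i)$, providing the desired $y_i$.

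Finally, for each $i \in \curly{0, \dots, k-1}$ I would pick any such $y_i \in V(P_i) \cap (\mu \cup \mu')$. Lemma~\ref{compintersect} applied to the pair $(P_0, P_0)$ then links $u$ to $y_0$ (both lie in $V(P_0) \cap (\mu \cup \mu')$); applied to $(P_{i-1}, P_i)$ it links $y_{i-1}$ to $y_i$ for every $i = 1, \dots, k-1$; and applied to $(P_{k-1}, P_{k-1})$ it links $y_{k-1}$ to $v$. Lemma~\ref{link_reflexive}(ii) then chains these linkings together to deliver $u$ linked to $v$. I expect the only genuine obstacle to lie in the vertex-existence check of the previous paragraph; with that in hand the rest is a straightforward composition of the two preceding lemmas, and no further structural property of the component is needed.
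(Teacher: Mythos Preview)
Your proposal is correct and follows essentially the same approach as the paper: fix a path in the component, cover each edge by a principal block via Proposition~\ref{edge_to_circuit}, verify that every principal block meets $\mu\cup\mu'$ (you argue $m$ cannot be single in both matchings, the paper argues $\mu(m)\neq\mu'(m)$ since the block contains an edge---both yield the same conclusion), and then chain Lemma~\ref{compintersect} along consecutive blocks using Lemma~\ref{link_reflexive}(ii). The only cosmetic slip is that your chain delivers ``$u$ linked to $v$'' rather than the stated ``$v$ linked to $u$'', but reversing the path (or swapping the roles in each application of Lemma~\ref{compintersect}) fixes this immediately.
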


\begin{proof}
Consider a path in $\GStarMu$ between $v$ and $u$, with edges $e_1, \dots, e_k$ ($k\geq 1$).
 By Proposition~\ref{edge_to_circuit}, we can choose a principal block
$P_i$ such that $e_i \in P_i$. If $k=1$, then $u$ and $v$ are in  $V(P_1) \cap (\mu \cup \mu')$.
Using the arguments of Lemma~\ref{compintersect} (Case A), we can conclude that $v$ is linked to $u$. 

Assume now $k>1$. Note that
we must have $V(P_i) \cap (\mu \cup \mu') \neq \emptyset$, for all $1 \leq i \leq k$. To see this, assume without loss of generality
 that $P_i = P_{\mu, \mu'}^m$ for a man $m$. Since we have $\mu(m) \neq \mu'(m)$, at least
 one between $\mu(m)$ and $\mu'(m)$ is $\neq \bot$. Hence at least one between $(m, \mu(m))$ and $(m, \mu'(m))$ is in $V(P_i) \cap (\mu \cup \mu')$. 
For all $1 \leq i\leq k$, let $\ell_i\in V(P_i) \cap (\mu \cup \mu')$, with $\ell_1 = v$ and $\ell_k =u$.
Note that $e_i$ and $e_{i+1}$ share a vertex, and therefore $P_i$ and $P_{i+1}$ intersect (possibly, $P_i = P_{i+1}$).
By Lemma~\ref{compintersect}, $\ell_i$ is linked to $\ell_{i+1}$. Applying 
Lemma~\ref{link_reflexive}, we can conclude that $\ell_1$ is linked to $\ell_k$, i.e., $v$ is linked to $u$.
\qed
\end{proof}

Finally, we put things together and get the following:

\begin{corollary}
\label{boringcorollary}
Fix arbitrarily $u \in \mu \cup \mu'$ to be a vertex in the nontrivial component of
$\GStarMu$. Then every vertex $v$ of $\Gamma$ is linked to $u$.
\end{corollary}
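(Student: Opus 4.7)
The plan is to do a case analysis on where the vertex $v$ sits with respect to the two matchings, splitting $V(\Gamma) = \acceptable$ into (i) $v \notin \mu \cup \mu'$, (ii) $v$ lies in the symmetric difference $(\mu \cup \mu') \setdiff (\mu \cap \mu')$, and (iii) $v \in \mu \cap \mu'$. In each case, the goal is to exhibit a vector in $E$ of the form $\lambda e_v + \gamma e_u$ with $\lambda \neq 0$.

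Cases (i) and (ii) will follow directly from the tools already set up. For case (i), Lemma~\ref{lem:obs}(a) gives $e_v \in E$ immediately, so $v$ is linked to $u$ by taking $\lambda = 1$ and $\gamma = 0$. For case (ii), since $v \notin \mu \cap \mu'$, Lemma~\ref{trivial_component_GStarMu} rules out $v$ being isolated in $\GStarMu$, and by the standing hypothesis that $\GStarMu$ has exactly one nontrivial component, $v$ must lie in the same component as $u$; Lemma~\ref{lem:final} (with Lemma~\ref{link_reflexive}(i) covering the trivial case $v = u$) then yields the link.

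The main obstacle will be case (iii), since when $v = (m, w) \in \mu \cap \mu'$ the vertex $v$ is isolated in $\GStarMu$ (again by Lemma~\ref{trivial_component_GStarMu}), so Lemma~\ref{lem:final} no longer applies. The plan for this case is to invoke inequality~\eqref{mleq1} at $m$: since $m$ is matched in both $\mu$ and $\mu'$, this inequality is tight for both extreme points, which places $\sum_{w' : (m, w') \in \acceptable} e_{(m, w')}$ in $E$. For every $w' \neq w$ with $(m, w') \in \acceptable$, the pair $(m, w')$ lies outside $\mu \cup \mu'$ (because $m$'s partner in both matchings is $w$), so Lemma~\ref{lem:obs}(a) puts $e_{(m, w')} \in E$ as well. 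Subtracting these off the full sum isolates $e_v \in E$, linking $v$ to $u$. Combining the three cases then completes the proof.
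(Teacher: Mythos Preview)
Your proof is correct and follows the same three-way case split as the paper (outside $\mu\cup\mu'$; in the nontrivial component; in $\mu\cap\mu'$), invoking Lemma~\ref{lem:obs}(a) and Lemma~\ref{lem:final} exactly as the paper does for the first two cases. The only difference is in case~(iii): for $v=(m,w)\in\mu\cap\mu'$ you obtain $e_v\in E$ by taking the tight degree constraint~\eqref{mleq1} at $m$ and subtracting off the $e_{(m,w')}$ for $w'\neq w$ via Lemma~\ref{lem:obs}(a), whereas the paper simply observes that the box inequality $x_v\leq 1$ is valid for the $0/1$ polytope $\mathcal P_{SMT}$ and tight at both $\mu$ and $\mu'$, so $-e_v\in\bar E$ and hence $e_v\in E$ directly. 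Both arguments are valid; the paper's is a one-liner, while yours stays within the explicit LP-relaxation inequalities~\eqref{stuffgeq0}--\eqref{theweirdinequality}.
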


\begin{proof}
If $v\notin \mu \cup \mu'$, then $v$ is linked to $u$ by Lemma~\ref{lem:obs}(a). \\
If $v \in \mu \cup \mu'$ and $v$ is a trivial component of $\GStarMu$, then $v \in \mu
\cap \mu'$ by Lemma~\ref{trivial_component_GStarMu}. Thus, the inequality $(-e_v + 0e_u, -1)$ is in $E'$. Therefore, 
the vector $(-e_v + 0e_u)$ is in $E$ and thus $v$ is linked to $u$.\\
Finally, if $v \in \mu \cup \mu'$ and $v$ is part of the nontrivial component of $\GStarMu$, 
$v$ is linked to $u$ by Lemma~\ref{lem:final}.\qed
\end{proof}

It is now easy to construct a set $S \subseteq E$ of linearly independent vectors, as follows.
Fix arbitrarily $u \in \mu \cup \mu'$ to be a vertex in the nontrivial component of
$\GStarMu$. Corollary~\ref{boringcorollary} gives that
for each vertex $v \in \Gamma$, $E$ contains a vector of the form $b_v = (\lambda_v e_v
+ \gamma_v e_u)$ with $\lambda_v \neq 0$.
Set $S:=\{b_v
\setb v \in \acceptable, v \neq u\}$. Then $|S|=|\acceptable|-1$, and the vectors in $S$ are clearly linearly independent.
This concludes the proof of the ``if'' part of Theorem~\ref{thetheorem}.
  
\subsection{Proof of the ``only if'' part}
We denote by $I(\mu, \mu')$ the set indexing all
the nontrivial components of $\GStarMu$. For any $i \in I(\mu, \mu')$, we let
$\GStarR$ be the corresponding component. Furthermore, for any matching $\mu'' \subseteq V(\GStarMu)$, let 
$\mu''_{|_i}$ be the restriction
to $\GStarR$.
The following lemma explains the key property of $\GStarMu$.

\begin{lemma}
\label{coollemma}
Let $\dot\mu$ be a matching that is a subset of $V(\GStarMu)$. Then $\dot\mu$ is stable if
\begin{itemize}
\item[(i)]
$\dot\mu$ contains all the isolated vertices of $\GStarMu$;
\item[(ii)]
for any $i \in I(\mu, \mu')$, $\dot\mu$ agrees with $\mu$ or $\mu'$ on the
component $\GStarR$, that is, $\dot\mu_{|_i}$ is either equal to
$\mu_{|_i}$ or to $\mu'_{|_i}$.
\end{itemize}
\end{lemma}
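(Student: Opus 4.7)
The plan is to argue by contradiction: assume $(m, w) \in \acceptable$ is a blocking pair of $\dot\mu$, so $w >_m \dot\mu(m)$ and $m >_w \dot\mu(w)$, and derive a contradiction using stability of $\mu$ and $\mu'$ together with the component structure of $\GStarMu$.

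The first step is a partner claim: $\dot\mu(m) \in \{\mu(m), \mu'(m)\}$ and symmetrically $\dot\mu(w) \in \{\mu(w), \mu'(w)\}$, where $\bot$ is admitted as a value. If $\dot\mu(m) \neq \bot$, the pair $(m, \dot\mu(m)) \in \dot\mu$ sits in $V(\GStarMu)$; if isolated, Lemma~\ref{trivial_component_GStarMu} gives $\mu(m) = \mu'(m) = \dot\mu(m)$, and if in a nontrivial component $C$, hypothesis (ii) forces $\dot\mu(m) = \mu(m)$ or $\dot\mu(m) = \mu'(m)$. If instead $\dot\mu(m) = \bot$ while, say, $\mu(m) \neq \bot$, look at the component of $(m, \mu(m))$: it cannot be isolated (condition (i) would put it in $\dot\mu$), and on its nontrivial component $C'$ we must have $\dot\mu_{|_{C'}} = \mu'_{|_{C'}}$; but if $\mu'(m) \neq \bot$, the vertex $(m, \mu'(m))$ lies in $C'$ (it shares the coordinate $m$ with $(m, \mu(m))$ and is adjacent in $\GStarMu$), which would force $\dot\mu(m) = \mu'(m) \neq \bot$, a contradiction; so $\mu'(m) = \bot$, confirming the claim.

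With the claim in hand, the four combinations of $\dot\mu(m) \in \{\mu(m), \mu'(m)\}$ and $\dot\mu(w) \in \{\mu(w), \mu'(w)\}$ split into two trivial cases and two symmetric nontrivial ones. If $\dot\mu(m) = \mu(m)$ and $\dot\mu(w) = \mu(w)$, then $(m,w)$ would block $\mu$; similarly for $\mu'$. So assume WLOG $\dot\mu(m) = \mu(m)$ and $\dot\mu(w) = \mu'(w)$. Stability of $\mu$ combined with $w >_m \mu(m)$ forces $\mu(w) \geq_w m$, and stability of $\mu'$ combined with $m >_w \mu'(w)$ forces $\mu'(m) \geq_m w$. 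This yields $\mu(m) <_m w \leq_m \mu'(m)$ and $\mu'(w) <_w m \leq_w \mu(w)$, exactly the between conditions, so $(m,w) \in V(\GStarMu)$; since $\mu(m) \neq w$, Lemma~\ref{trivial_component_GStarMu} places $(m,w)$ in some nontrivial component $C$.

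Finally, I locate the conflicting partners inside $C$. The vertices $(m, \mu'(m))$ and $(\mu(w), w)$ both lie in $V(\GStarMu)$ (the between condition holds trivially at an endpoint), are distinct from $(m, w)$ (the strict inequalities above rule out $\mu'(m) = w$ and $\mu(w) = m$), and each shares a coordinate with $(m, w)$, so each is adjacent to $(m, w)$ in $\GStarMu$ and therefore lies in $C$. But $\mu_{|_C}$ contains $(\mu(w), w)$ while $\mu'_{|_C}$ contains $(m, \mu'(m))$, and $\dot\mu$ contains neither—contradicting the requirement from (ii) that $\dot\mu_{|_C}$ equal one of $\mu_{|_C}$, $\mu'_{|_C}$. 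The main obstacle is the $\bot$ bookkeeping in the partner claim; once that is settled, the rest is a short computation driven by the between condition and the fact that any two vertices sharing a coordinate are adjacent in $\Gamma$.
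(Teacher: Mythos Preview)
Your proof is correct and follows essentially the same approach as the paper: both arguments reduce to showing that a potentially blocking pair $(m,w)$ lands in $V(\GStarMu)$ together with the relevant $\mu$- and $\mu'$-partner vertices, so that condition~(ii) on that nontrivial component forces $\dot\mu$ to cover $(m,w)$. The only cosmetic difference is that you frame it as a contradiction and factor out the partner claim $\dot\mu(m)\in\{\mu(m),\mu'(m)\}$ (including the $\bot$ bookkeeping) as a preliminary step, whereas the paper runs a direct case analysis on which of $\mu,\mu'$ dominates the pair and handles the partner identification inline.
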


\begin{proof}
Let $(m, w) \in \acceptable$ be any acceptable pair. Since $\mu$ is stable, we have
$\mu(m) \geq_m w$ or $\mu(w) \geq_w m$. Since $\mu'$ is stable, we have $\mu'(m)
\geq_m w$ or $\mu'(w) \geq_w m$. Therefore, we have four possible cases, 
that without loss of generality reduce to the following two (up to symmetry).

\emph{Case 1: $\mu(m) \geq_m w$ and $\mu'(m) \geq_m w$.}
Clearly, ($\mu(m),m$) and ($\mu'(m),m$) are in the same component of
$\GStarMu$. If this component is an isolated vertex, we must have $\dot\mu(m) =
\mu(m) = \mu'(m) \geq_m w$ by the condition $(i)$.
Otherwise, this component is nontrivial, so by the second condition $(ii)$, we either
have $\dot\mu(m) = \mu(m)$ or $\dot\mu(m) = \mu'(m)$. In either case, we have
$\dot\mu(m) \geq_m w$.

\emph{Case 2: $\mu(m) \geq_m w$ and $\mu'(w) \geq_w m$.}
If $\mu(w) \geq_w m$ or $\mu'(m) \geq_m w$, we can reduce to the previous case, so
assume that $m >_w \mu(w)$ and $w >_m \mu'(m)$.
Then, we have $(m, w) \in V(\GStarMu)$.
Since $(m, w)$ is in $V(\GStarMu)$ and is adjacent to both $(m, \mu(m))$
and $(\mu'(w), w)$, those two vertices must be in the same component of
$\GStarMu$. By the second condition $(ii)$, we know that $\dot\mu$ either agrees with
$\mu$ or with $\mu'$ on this component, so we either have $(m, \mu(m)) \in
\dot\mu$ (and thus $\dot\mu(m) = \mu(m) \geq_m w$) or we have $(\mu'(w), w) \in
\dot\mu$ (and thus $\dot\mu(w) = \mu'(w) \geq_w m$).\qed

\end{proof}

The next corollary yields a proof of the ``only if'' part of Theorem~\ref{thetheorem}.

\begin{corollary}
\label{cor:only_if}
If $\GStarMu$ contains at least two nontrivial components, then $\mu$ and $\mu'$ correspond to two non adjacent extreme points of $\mathcal P_{SMT}$.
\end{corollary}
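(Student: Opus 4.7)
The plan is to prove non-adjacency by exhibiting two stable matchings $\mu_1, \mu_2$, distinct as an unordered pair from $\{\mu, \mu'\}$, whose characteristic vectors satisfy $x_{\mu_1} + x_{\mu_2} = x_\mu + x_{\mu'}$. Since the midpoint $\frac{1}{2}(x_\mu + x_{\mu'})$ would then also lie in the relative interior of the segment $[x_{\mu_1}, x_{\mu_2}]$, the segment $[x_\mu, x_{\mu'}]$ cannot be a face of $\mathcal{P}_{SMT}$, and hence $x_\mu$ and $x_{\mu'}$ are not adjacent.

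Given the hypothesis, pick two distinct nontrivial components $i_1 \neq i_2$ of $\GStarMu$. Define $\mu_1$ by agreeing with $\mu'$ on $\GStarRestrict{\mu}{\mu'}{i_1}$ and agreeing with $\mu$ on every other nontrivial component of $\GStarMu$ (in particular on $\GStarRestrict{\mu}{\mu'}{i_2}$), together with every isolated vertex of $\GStarMu$. Define $\mu_2$ by swapping the roles of $\mu$ and $\mu'$ in this description. Both choices satisfy conditions (i) and (ii) of Lemma~\ref{coollemma}, so both matchings (once we verify they actually are matchings) are stable.

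The key sub-claim to check is that $\mu_1$ and $\mu_2$ are genuinely matchings, i.e.\ that independent choices across different components do not produce person-conflicts. This reduces to observing that for any individual $d$ with $\mu(d) \neq \mu'(d)$, the two vertices $(d, \mu(d))$ and $(d, \mu'(d))$ (or the single one of them that exists, if $d$ is single in one of the matchings) always lie in the \emph{same} component of $\GStarMu$: both belong to $V(\GStarMu)$ by the definition of $\xbetween_d$, and they are joined by the $d$-preference edge of $\Gamma$. Hence choosing $\mu$ on one component and $\mu'$ on another never assigns $d$ two different partners. I also need $\mu_1, \mu_2 \notin \{\mu, \mu'\}$: on $i_1$, $\mu_1$ agrees with $\mu'$ while $\mu$ agrees with $\mu$, and these two restrictions differ by Lemma~\ref{trivial_component_GStarMu} (a shared pair would be an isolated vertex, not an inhabitant of a nontrivial component); on $i_2$, $\mu_1$ agrees with $\mu$ while $\mu'$ agrees with $\mu'$, so $\mu_1 \neq \mu'$ as well. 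The argument for $\mu_2$ is symmetric.

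Finally I verify the vector identity. The supports of $x_\mu + x_{\mu'}$ and $x_{\mu_1} + x_{\mu_2}$ both sit inside $V(\GStarMu)$, so it suffices to compare coordinates component-by-component. On each isolated vertex (which lies in $\mu \cap \mu'$ by Lemma~\ref{trivial_component_GStarMu}) all four matchings assign value $1$, contributing $2$ to both sides. On each nontrivial component $i$, exactly one of $\mu_1, \mu_2$ agrees with $\mu$ and the other with $\mu'$ (by the very definition of $\mu_1, \mu_2$), so $x_{\mu_1}\!\restriction_i + x_{\mu_2}\!\restriction_i = x_\mu\!\restriction_i + x_{\mu'}\!\restriction_i$. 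Summing yields $x_{\mu_1} + x_{\mu_2} = x_\mu + x_{\mu'}$ and completes the proof. The only step I expect to require the most care when written out in full is the verification that $\mu_1, \mu_2$ are matchings, i.e.\ the ``same component'' observation above.
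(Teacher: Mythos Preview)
Your proposal is correct and follows essentially the same route as the paper: construct $\mu_1$ (the paper's $\bar\mu$) by swapping to $\mu'$ on one fixed nontrivial component and keeping $\mu$ elsewhere, and $\mu_2$ (the paper's $\bar\mu'$) by doing the reverse, then observe that $x_{\mu}+x_{\mu'}=x_{\mu_1}+x_{\mu_2}$ forces non-adjacency via the midpoint argument. You actually supply more detail than the paper on two points it leaves implicit --- that the hybrids are genuine matchings (your ``same component'' observation for the two partners of any individual $d$) and that the restrictions $\mu|_{i_1}$ and $\mu'|_{i_1}$ differ; for the latter, note that disjointness alone (via Lemma~\ref{trivial_component_GStarMu}) is not quite enough, you also need that a nontrivial component contains at least one vertex of $\mu\cup\mu'$, which follows from the argument in the proof of Lemma~\ref{lem:final}.
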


\begin{proof}
Let $I(\mu, \mu') :=\{1, \dots, k\}$, with $k \geq 2$. Fix an index $j$ in $I(\mu, \mu')$. Define
$\bar \mu$ as follows. For every isolated vertex $v$ of $\GStarMu$ we have $v \in
\bar\mu$. Furthermore, for $i=1,\dots,k$, we have:

\[
\bar\mu{}_{|_i} = \begin{cases}
\mu'_{|_i}, & \textrm{if } i = j \\
\mu_{|_i}, & \textrm{otherwise } 
\end{cases}
\]

Similarly, we define
$\bar \mu'$ as follows. For every isolated vertex $v$ of $\GStarMu$ we have $v \in
\bar\mu'$. Furthermore, we have:

\[
\bar\mu'{}_{|_i} = \begin{cases}
\mu_{|_i}, & \textrm{if } i = j \\
\mu'_{|_i}, & \textrm{otherwise } 
\end{cases}
\]
By Lemma~\ref{coollemma}, both $\bar \mu$ and $\bar \mu'$ are stable matchings, and hence correspond to
extreme points of $\mathcal P_{SMT}$.  Furthermore, $\bar \mu, \mu, \bar \mu', \mu'$ are all distinct, since $k>1$. 
One observes that 
$\frac{1}{2}x_{\mu} + \frac{1}{2}x_{\mu'} = \frac{1}{2}x_{\bar \mu} + \frac{1}{2}x_{\bar \mu'}$. This implies that
$x_{\mu}$ and $x_{\mu'}$ are not adjacent extreme points, since otherwise there would be a unique way to express their midpoint
as a convex combination of extreme points of $\mathcal P_{SMT}$.
\qed
\end{proof}

\section{Bounding the diameter}

Lemma~\ref{coollemma} clearly gives a strategy
 to get a path from $\mu$ to
$\mu'$ on the 1-skeleton of $\mathcal P_{SMT}$. Similarly to~\cite{stable_diameter}, 
we can change the coordinates of the corresponding points using one nontrivial component of
$\GStarMu$ at the time.

Let $I(\mu, \mu') = \curly{1,
\dots, k}$. Define a sequence of matchings $\mu^0,
\mu^1, \dots, \mu^k$ as follows. For any $0 \leq j \leq k$, choose
$\mu^j$ such that every isolated vertex $v$ of $\GStarMu$ is in 
$\mu^j$, and such that for every $i \in I(\mu, \mu')$, we have
\[
\mu^j{}_{|_i} = \begin{cases}
\mu'_{|_i}, & \textrm{if } i \leq j \\
\mu_{|_i}, & \textrm{if } i > j
\end{cases}
\]

Using Lemma~\ref{coollemma}, we can see that each $\mu^j$ is
a stable matching. Furthermore, note that $\mu^0 = \mu$ and $\mu^k = \mu'$. 
To show that $\mu^0, \dots, \mu^k$ is a path of length $k$ from $\mu$
to $\mu'$, it remains to show that for every $j < k$, $\mu^j$ and
$\mu^{j+1}$ are adjacent. This is done in the next lemma.

\begin{lemma}
\label{lem:adjacent}
For every $j < k$, $\mu^j$ and
$\mu^{j+1}$ are adjacent.
\end{lemma}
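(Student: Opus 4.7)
My plan is to reduce this directly to the ``if'' direction of Theorem~\ref{thetheorem}, which is already proven. Concretely, I will argue that $\GStarDotMu$ has exactly one nontrivial component, and that this component is precisely $\GStarRjj$, the $(j+1)$-st nontrivial component of $\GStarMu$. Once this identification is made, adjacency follows immediately.

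The first step is to embed $\GStarRjj$ into $\GStarDotMu$. I observe that each principal block $P^m_{\mu,\mu'}$ (resp.\ $P^w_{\mu,\mu'}$) is connected (the edges inside a principal block form a complete graph in the underlying undirected sense), so it lies entirely within a single component of $\GStarMu$. Hence, for any $(m,w) \in V(\GStarRjj)$, both $m$ and $w$ participate only in the $(j+1)$-st component, and by construction of the intermediate matchings we get $\{\mu^j(m), \mu^{j+1}(m)\} = \{\mu(m), \mu'(m)\}$ and $\{\mu^j(w), \mu^{j+1}(w)\} = \{\mu(w), \mu'(w)\}$. Therefore the between predicates used to define $V(\GStarDotMu)$ coincide with those defining $V(\GStarMu)$ at this vertex, giving $V(\GStarRjj) \subseteq V(\GStarDotMu)$. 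Since both graphs are induced subgraphs of $\Gamma$, edges agree, so $\GStarRjj$ appears as an induced nontrivial subgraph of $\GStarDotMu$.

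The second step is to show every other vertex of $\GStarDotMu$ is isolated. Take $(m,w) \in V(\GStarDotMu)$. If $\mu^j(m) = \mu^{j+1}(m)$ and $\mu^j(w) = \mu^{j+1}(w)$, the between conditions collapse to equalities, forcing $(m,w) \in \mu^j \cap \mu^{j+1}$, which is isolated by Lemma~\ref{trivial_component_GStarMu}. Otherwise, without loss of generality $\mu^j(m) \neq \mu^{j+1}(m)$, so $m$ must lie in the $(j+1)$-st component (the only component where $\mu^j$ and $\mu^{j+1}$ disagree), and $P^m_{\mu,\mu'} \subseteq \GStarRjj$. A short case split on $w$ shows $(m,w) \in V(\GStarRjj)$: if $\mu^j(w) \neq \mu^{j+1}(w)$ then $w$ is also in component $j+1$ and the between conditions match those for $\GStarMu$; if $\mu^j(w) = \mu^{j+1}(w)$ then the between condition for $w$ collapses $(m,w)$ into $\mu^j$, which (combined with $m \in$ component $j+1$) again puts $(m,w)$ in $V(\GStarRjj)$ through $m$'s principal block.

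Combining both steps, $\GStarDotMu$ is the disjoint union of $\GStarRjj$ and some isolated vertices, so it has exactly one nontrivial component, and Theorem~\ref{thetheorem} yields adjacency. The delicate part is the case analysis of the second step: one has to rule out the possibility that changing $m$'s partner pushes $(m,w)$ into a vertex whose woman lies in a different component (or in none), which is handled by exploiting the connectedness of principal blocks and the definition of $\mu^j$, $\mu^{j+1}$ as component-wise copies of $\mu$ and $\mu'$.
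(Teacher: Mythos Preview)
Your proof is correct and follows essentially the same route as the paper: you show that $\GStarDotMu$ has exactly one nontrivial component and then invoke the ``if'' direction of Theorem~\ref{thetheorem}. The only difference is that you establish the slightly stronger intermediate claim that this nontrivial component \emph{equals} $\GStarRjj$ (via the connectedness of principal blocks and a case split on $w$), whereas the paper is content to show that every non-isolated vertex of $\GStarDotMu$ is \emph{adjacent to} a vertex of $\GStarRjj$; both arguments are valid and lead to the same conclusion.
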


\begin{proof}
First, observe that 
every vertex of $\GStarRjj$ is in $\GStarDotMu$, since
$\mu^j{}_{|_{j+1}} = \mu_{|_{j+1}}$ and
$\mu^{j+1}{}_{|_{j+1}} = \mu'_{|_{j+1}}$.
Second, observe that every vertex of $\GStarDotMu$ is in
$\mu^j \cap \mu^{j+1}$ or adjacent to a vertex in $\GStarRjj$.
To show this, let $v$ be some vertex in $V(\GStarDotMu)$ such that
$v \notin \mu^j \cap \mu^{j+1}$.
By Lemma~\ref{trivial_component_GStarMu}, $v$ is not an isolated vertex of
$\GStarDotMu$, so it has to have a neighbor $v' \in V(\GStarDotMu)$.
Without loss of generality, assume the edge between $v$ and $v'$ represents the preference of a man
$m$. Let $v = (m, w)$ and $v' = (m, w')$ (where $w \neq w'$). 
Note that this implies that
$\mu^j(m) \neq \mu^{j+1}(m)$.
Thus, we must have $\mu^j(m) = \mu(m)$ and $\mu^{j+1}(m) = \mu'(m)$.
Thus, either $(m, \mu(m))$ or $(m, \mu'(m))$ is a vertex of $\GStarRjj$, and $v$ is adjacent to this
vertex.

From the above two observations, we can conclude that $\GStarDotMu$ has exactly one nontrivial
component: from the first one we can conclude that there is one nontrivial
component $\Phi$ (the one containing $\GStarRjj$). From the second one, we can conclude
that every vertex of $\GStarDotMu$ is either adjacent to some vertex in $\GStarRjj$
(and thus in $\Phi$) or it is in $\mu^j \cap \mu^{j+1}$ (and thus it is an isolated
vertex of $\GStarDotMu$).
Theorem \ref{thetheorem} lets us conclude that $\mu^j$ and
$\mu^{j+1}$ are adjacent. \qed

\end{proof}

The above discussion and Lemma~\ref{lem:adjacent} yield a proof of the following theorem.

\begin{theorem}
Let $\mu$ and $\mu'$ be two arbitrary stable matchings. The distance between 
$\mu$ and $\mu'$ on the 1-skeleton of $\mathcal P_{SMT}$ is at most $\abs{I(\mu,
\mu')}$.
\end{theorem}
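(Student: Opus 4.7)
The plan is to read off the bound directly from the explicit walk constructed just before Lemma~\ref{lem:adjacent}. I would begin by recalling the sequence of matchings $\mu^0,\mu^1,\dots,\mu^k$ with $k=|I(\mu,\mu')|$, defined so that $\mu^j$ agrees with $\mu'$ on the first $j$ nontrivial components of $\GStarMu$, with $\mu$ on the remaining components, and with $\mu\cap\mu'$ on every isolated vertex of $\GStarMu$. By construction, $\mu^0=\mu$ and $\mu^k=\mu'$.

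Next I would note that each $\mu^j$ is a stable matching. This is an immediate application of Lemma~\ref{coollemma}: condition (i) of that lemma holds because $\mu^j$ contains every isolated vertex of $\GStarMu$, and condition (ii) holds because on each component $\GStar{\mu}{\mu'}|_i$, the restriction $\mu^j|_i$ is either $\mu|_i$ or $\mu'|_i$ by definition. Hence every $\mu^j$ corresponds to a vertex of $\mathcal P_{SMT}$.

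Then I would invoke Lemma~\ref{lem:adjacent} for each $j\in\{0,1,\dots,k-1\}$ to conclude that the extreme points $x_{\mu^j}$ and $x_{\mu^{j+1}}$ are adjacent on $\mathcal P_{SMT}$. Concatenating these $k$ edges produces a walk of length $k=|I(\mu,\mu')|$ from $x_\mu$ to $x_{\mu'}$ on the $1$-skeleton, so the graph-distance between them is at most $|I(\mu,\mu')|$.

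There is essentially no main obstacle here, since all the nontrivial work has been discharged in Lemma~\ref{coollemma} (which guarantees stability of every intermediate matching $\mu^j$) and in Lemma~\ref{lem:adjacent} (which upgrades stability to $1$-skeleton adjacency via Theorem~\ref{thetheorem}). The only point that warrants a brief sentence is that two consecutive $\mu^j,\mu^{j+1}$ may coincide only if $I(\mu,\mu')$ is empty, in which case $\mu=\mu'$ and the distance $0$ also satisfies the bound; otherwise every step of the walk is a genuine edge, and the stated bound follows.
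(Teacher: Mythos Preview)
Your proposal is correct and matches the paper's own argument essentially verbatim: the paper constructs the same sequence $\mu^0,\dots,\mu^k$, invokes Lemma~\ref{coollemma} for stability of each $\mu^j$, and Lemma~\ref{lem:adjacent} for adjacency of consecutive terms, then reads off the bound. Your closing remark about possible coincidence of consecutive $\mu^j$ is unnecessary (Lemma~\ref{trivial_component_GStarMu} forces $\mu|_i\neq\mu'|_i$ on every nontrivial component, so $\mu^j\neq\mu^{j+1}$ whenever $k\geq 1$), but it does no harm.
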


As a corollary, we get the following bound on the diameter of $\mathcal P_{SMT}$.
\begin{corollary}
\label{cor:final}
Let $n := |M \cup W|$. The diameter of $\mathcal P_{SMT}$ is at most $\floor{\frac{n}{3}}$.
\end{corollary}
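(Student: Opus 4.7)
The plan is to derive the bound from the theorem immediately above, which states that the distance between two stable matchings $\mu, \mu'$ is at most $\abs{I(\mu, \mu')}$, the number of nontrivial components of $\GStarMu$. It therefore suffices to show that $\abs{I(\mu, \mu')} \leq \floor{n/3}$ for every pair of stable matchings. I would do this by a double counting argument: assign to each nontrivial component $C$ of $\GStarMu$ the set $\pi(C) \subseteq M \cup W$ of individuals (men or women) appearing in at least one vertex of $C$, and prove that (i) $\abs{\pi(C)} \geq 3$ for every nontrivial component $C$, and (ii) the sets $\pi(C)$ are pairwise disjoint. Together these give $3\,\abs{I(\mu, \mu')} \leq \sum_{C} \abs{\pi(C)} \leq n$, which yields $\abs{I(\mu, \mu')} \leq \floor{n/3}$ by integrality, and hence the diameter bound.

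For (i), a nontrivial component contains at least one edge. By definition of $E(\Gamma)$, any such edge connects two vertices that share exactly one individual (either a common man or a common woman), so its two endpoints together involve three distinct members of $M \cup W$.

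For (ii), I would use the fact that $\leq_m$ is a total weak order on $P(m)$, so any two distinct vertices $(m, w), (m, w') \in V(\Gamma)$ are joined by an edge in $\Gamma$. Consequently, whenever both lie in $V(\GStarMu)$, the edge persists in $\GStarMu$ and the two vertices belong to the same component; hence every vertex of $\GStarMu$ involving a fixed man $m$ lies in one single component, and symmetrically for any fixed woman $w$. This shows that each individual appears in at most one nontrivial component. The step I expect to require the most care is this disjointness argument: one must check that a person $m$ with $\mu(m) = \mu'(m)$ causes no accounting problem. Inspecting the $\xbetween$ condition shows that in that case either $m$ contributes no vertex to $V(\GStarMu)$ at all (when $\mu(m) = \mu'(m) = \bot$) or contributes only the single vertex $(m, \mu(m))$, which is isolated by Lemma~\ref{trivial_component_GStarMu}; either way $m$ does not belong to any $\pi(C)$, so no double counting occurs. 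Once this is verified, the counting argument above completes the proof.
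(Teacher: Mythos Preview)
Your proposal is correct and follows essentially the same approach as the paper. The paper's proof is a two-line argument that each nontrivial component involves at least three people and hence there are at most $n/3$ of them; your version makes the implicit disjointness of the sets $\pi(C)$ explicit and handles the corner case $\mu(m)=\mu'(m)$, but the underlying idea is identical.
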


\begin{proof}
The smallest nontrivial component of $\GStarMu$ (for any $\mu, \mu'$)
has two vertices, and thus involves three people (it could represent either a man indifferent
between two women, or a woman indifferent between two men). Thus the number of
nontrivial components is at most $\frac n 3$.\qed
\end{proof}

If we do not have ties, then the smallest nontrivial component of $\GStarMu$ (for any $\mu, \mu'$)
involves at least two men and two women (see~\cite{stable_diameter}). Thus, in this case the bound for the diameter can be strengthened to $\frac n 4$ (in fact, to $\min \left\{ \frac{|M|}{2},\frac{|W|}{2} \right\}$), as shown in~\cite{stable_diameter}.

Eventually, note that it is easy to construct a family of instances for which the bound of Corollary~\ref{cor:final} is tight.
Consider a set of $t$ men $M:=\{m_1, \dots, m_t\}$ and 
$2t$ women $W:=\{w_1, \dots, w_{2t}\}$. 
Let $P(m_i)=\{w_i, w_{i+t}\}$ with $w_i \indiff_{m_i} w_{i+t}$, for $i=1,\dots,t$.
Let $P(w_i)= P(w_{i+t})=\{m_i\}$, for $i=1,\dots,t$.
Let $\mu$ be the stable matching given by all pairs of the form $(m_i,w_i)$ for $i=1,\dots,t$, and 
$\mu'$ be the stable matching given by all pairs of the form $(m_i,w_{i+t})$ for $i=1,\dots,t$. One checks
that $\mu$ and $\mu'$ are stable matchings, and the corresponding extreme points are at distance $t$ on $\mathcal P_{SMT}$. 

\bibliographystyle{splncs04}
\bibliography{references}
\end{document}